\newcommand{\blind}{0}
\newtheorem{theorem}{Theorem}
\newenvironment{proof}[1][Proof]{\noindent\textbf{#1.} }{\ \rule{0.5em}{0.5em}}
\begin{document}

\def\spacingset#1{\renewcommand{\baselinestretch}%
{#1}\small\normalsize} \spacingset{1}


\if0\blind
{
  \title{\bf Bayes' Theorem under Conditional Independence}
  \author{Jun Hu\footnote{Jun Hu is Assistant Professor in the Department of Mathematics and Statistics, Oakland University, Rochester, MI 48309. Email address: \url{junhu@oakland.edu}.} \ and \ Xianggui Qu\footnote{Xianggui Qu is Professor in the Department of Mathematics and Statistics, Oakland University, Rochester, MI 48309. Email address: \url{qu@oakland.edu}.}
      }
    \date{}
  \maketitle
} \fi

\if1\blind
{
  \bigskip
  \bigskip
  \bigskip
  \begin{center}
    {\LARGE\bf Title}
\end{center}
  \medskip
} \fi

\bigskip
\begin{abstract}
In this article we provide a substantial discussion on the statistical concept of conditional independence, which is not routinely mentioned in most elementary statistics and mathematical statistics textbooks. Under the assumption of conditional independence, an extended version of Bayes' Theorem is then proposed with illustrations from both hypothetical and real-world examples of disease diagnosis.      
\end{abstract}

\noindent%
{\it Keywords:} Disease diagnosis; Extended Bayes' Theorem; HIV testing 
\vfill

\newpage
\spacingset{1.45} 
\section{Introduction}
\label{Sect. 1}
Inarguably, \textit{conditional probability} and \textit{independence} are two concepts that play an important role in statistical theory. Most elementary statistics and mathematical statistics textbooks discuss these two concepts in detail and then illustrate the well-known Bayes' Theorem, such as \cite{Wackerly Mendenhall and Scheaffer (2014)} and \cite{Hogg et al. (2015)}. To our surprise, however, the concept of \textit{conditional independence} has been rarely mentioned since its appearance in \cite{Dawid (1979)} more than forty years ago, let alone a systematic introduction.

In this article, therefore, we give a substantial discussion on conditional independence. We focus on conditional independence of events instead of random variables for illustrative purposes. This way, the concept is made as simple as possible for students to understand, but no simpler. In Section \ref{Sect. 2}, a number of straightforward examples are provided to point out some basic properties of conditional independence as well as a series of seemingly correct yet wrong arguments that students may make to supplement the existing literature. Then in Section \ref{Sect. 3}, we propose an extended version of Bayes' Theorem under the assumption of conditional independence to accommodate practical applicability, and also use hypothetical and real-world examples to demonstrate the possible application in disease diagnosis. The materials will be helpful for motivating undergraduate students to explore the story-line with more depth-confidence-grasp about how to apply the impressing result efficiently. We end with some concluding thoughts in  Section \ref{Sect. 4}.

\section{Conditional Independence}\label{Sect. 2}

In this section, we first revisit (statistical) independence between two events and thereby introduce the concept of conditional independence. After a sequence of preliminary results are presented, we extend the idea from the two-event case to multiple-event case.

\subsection{Basic concepts and preliminary results}

\noindent \textit{Definition 1.} (Independence) Two events $A_1$ and $A_2$ are said to be \textbf{independent} if and only if
\begin{eqnarray}\label{1}
P(A_1 | A_2) = P(A_1),
\end{eqnarray}
provided that $P(A_2)>0$.

Alternatively, independence can also be defined as follows:

\noindent \textit{Definition 2.} (Independence) Two events $A_1$ and $A_2$ are said to be \textbf{independent} if and only if 
\begin{eqnarray}\label{2}
P(A_1 \cap A_2) = P(A_1) \cdot P(A_2).
\end{eqnarray}

The first definition is straightforward to convey the meaning of independence: if two events are independent, then knowledge that one of the events has occurred has no effect on the probability that the other will occur. Nevertheless, most students will find the second definition more favorable since it does not require the assumption that $P(A_2)>0$ to make the conditional probability well defined. That is why we are going to introduce conditional independence along the line with the equation \eqref{2}.

\noindent \textit{Definition 3.} (Conditional Independence) Two events $A_1$ and $A_2$ are said to be \textbf{conditionally independent} given event $B$ with $P(B)>0$, if and only if
\begin{eqnarray}\label{3}
P(A_1 \cap A_2 | B) = P(A_1|B) \cdot P(A_2|B).
\end{eqnarray}
Otherwise, we say events $A_1$ and $A_2$ are \textbf{conditionally dependent} given $B$.

Conditional independence of two events can be interpreted in view of Definition 1: Under the condition that event $B$ has occurred, event $A_1$ (or $A_2$) occurring does not affect the probability that event $A_2$ (or $A_1$) occurs. Naturally, students may ask how independence and conditional independence might be associated with each other. Here, we provide several crucial remarks with examples to answer this question, which also demonstrate that independence and conditional independence can behave quite differently. We believe this will help students to avoid making misleading arguments that seem to make sense at the first glance. Afterwards, students may like to further scrutinize those arguments and construct their own counter-examples for practice. Throughout the article, $S$ is used to denote the sample space with \textit{equally likely} outcomes without otherwise specified, and the complement of an event $A$ is represented by $A^\prime$. 

\noindent \textit{Remark 1.} \textbf{Independence does not imply conditional independence necessarily, and vice versa.}

\noindent \textit{Example 1.} Let $S = \{1,2,3,4,5,6\}$. Define three events $A_1 = \{1,2,3\}$, $A_2=\{2,4\}$ and $B=\{1,3,4\}$. By the assumption of equally likely outcomes in $S$, it is trivial for students to obtain that
$$P(A_1)=\frac{1}{2},P(A_2)=\frac{1}{3},P(A_1\cap A_2)=\frac{1}{6},P(A_1|B)=\frac{2}{3},P(A_2|B)=\frac{1}{3},P(A_1\cap A_2|B)=0.$$
Therefore, students immediately find that $A_1$ and $A_2$ are independent since $P(A_1\cap A_2)=P(A_1)P(A_2)$. However, $A_1$ and $A_2$ are not conditionally independent given $B$ due to the fact that $P(A_1|B)P(A_2|B)\ne P(A_1\cap A_2|B)$.

\noindent \textit{Example 2.} Let $S = \{1,2,3,4,5,6,7,8\}$. Define three events $A_1 = \{1,2,3\}$, $A_2=\{2,4\}$ and $B=\{1,2,3,4,5,6\}$. Clearly,
$$P(A_1)=\frac{3}{8},P(A_2)=\frac{1}{4},P(A_1\cap A_2)=\frac{1}{8},P(A_1|B)=\frac{1}{2},P(A_2|B)=\frac{1}{3},P(A_1\cap A_2|B)=\frac{1}{6}.$$
Hence, $A_1$ and $A_2$ are conditionally independent under $B$, but are not independent of each other.

\noindent \textit{Remark 2.} \textbf{That two events $A_1$ and $A_2$ are conditionally independent given event $B$ does not necessarily imply that $A_1$ and $A_2$ are also conditionally independent given $B^\prime$, the complement of $B$.}

\noindent \textit{Example 3.} Let $S = \{1,2,3,4,5,6,7,8\}$. Define three events $A_1 = \{1,3,5,7\}$, $A_2=\{2,5,8\}$ and $B=\{1,2,3,4,5,6\}$. Then, by noting that
$$P(A_1|B)=\frac{1}{2},P(A_2|B)=\frac{1}{3}, \ \mbox{and} \ P(A_1\cap A_2|B)=\frac{1}{6},$$
one has that $A_1$ and $A_2$ are conditionally independent under $B$. However, $A_1$ and $A_2$ are not conditionally independent under $B^\prime$ since
$$P(A_1|C^\prime)=\frac{1}{2}, P(A_2|C^\prime)=\frac{1}{2}, \ \mbox{but} \ P(A_1\cap A_2|B^\prime)=0.$$

\noindent \textit{Remark 3.} \textbf{That two events $A_1$ and $A_2$ are both independent and conditionally independent given event $B$ does not imply that $A_1$ and $A_2$ are conditionally independent given $B^\prime$.}

\noindent \textit{Example 4.} Let $S = \{1,2,...,16\}$. Define three events as follows: $A_1 = \{1,2,...,11,12\}$, $A_2=\{1,2,3,4,5,6,15,16\}$ and $B=\{6,7,8,13,14,15\}$, for which 
$$A_1\cap A_2=\{1,2,3,4,5,6\} \ \mbox{and} \ B^\prime=\{1,2,3,4,5,9,10,11,12,16\}.$$
It is not hard for students to work out the following quantities:
\begin{eqnarray*}
&&P(A_1)=\frac{3}{4},P(A_2)=\frac{1}{2},P(A_1\cap A_2)=\frac{3}{8}=P(A_1)P(A_2),\\
&&P(A_1|B)=\frac{1}{2},P(A_2|B)=\frac{1}{3}, P(A_1\cap A_2|B)=\frac{1}{6}=P(A_1|B)P(A_2|B),\\
&&P(A_1|B^\prime)=\frac{9}{10},P(A_2|B^\prime)=\frac{3}{5},P(A_1\cap A_2|B^\prime)=\frac{1}{2} \ne P(A_1|B^\prime)P(A_2|B^\prime).
\end{eqnarray*}
In this example, students will notice that $A_1$ and $A_2$ are independent and conditionally independent under $B$, but they are conditionally dependent under $B^\prime$.

\noindent \textit{Remark 4.} \textbf{That two events $A_1$ and $A_2$ are both conditionally independent given $B$ and conditionally independent given $B^\prime$ does not necessarily imply $A_1$ and $A_2$ are independent.}

\noindent \textit{Example 5.} Let $S = \{1,2,...,14\}$. Define three events $A_1 = \{1,2,3,9,10,11,12,13,14\}$, $A_2=\{1,6,7,12,13,14\}$ and $B=\{1,2,3,4,5,6\}$. Then, we have
\begin{eqnarray*}
&&P(A_1|B)=\frac{1}{2},P(A_2|B)=\frac{1}{3}, P(A_1\cap A_2|B)=\frac{1}{6}=P(A_1|B)P(A_2|B),\\
&&P(A_1|B^\prime)=\frac{3}{4},P(A_2|B^\prime)=\frac{1}{2},P(A_1\cap A_2|B^\prime)=\frac{3}{8} = P(A_1|B^\prime)P(A_2|B^\prime),\\
&&P(A_1)=\frac{9}{14}, P(A_2)=\frac{3}{7}, P(A_1\cap A_2)=\frac{2}{7} \ne P(A_1)P(A_2).
\end{eqnarray*}
Thus, while $A_1$ and $A_2$ are conditionally independent under either $B$ or $B^\prime$, $A_1$ and $A_2$ are not independent.

Next, the following theorem points out a possible association between independence and conditional independence.
\begin{theorem}\label{Thm. 1}
Let $A_1$, $A_2$ and $B$ be three events with $P(B)>0$. If $A_1$ is independent of $B$ and $A_1$ is also independent of $A_2\cap B$, then $A_1$ and $A_2$ are conditionally independent given $B$.
\end{theorem}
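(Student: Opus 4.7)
The plan is to unfold the conclusion $P(A_1 \cap A_2 \mid B) = P(A_1 \mid B)\, P(A_2 \mid B)$ directly from the definition of conditional probability, and then use the two independence hypotheses (stated in the product form of Definition 2) to rewrite each side until they visibly coincide. Since $P(B) > 0$, all conditional probabilities involved are well defined, so the argument is just a matter of algebraic substitution rather than a case analysis.

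Concretely, first I would translate the two hypotheses: ``$A_1$ independent of $B$'' gives $P(A_1 \cap B) = P(A_1)\,P(B)$, and ``$A_1$ independent of $A_2 \cap B$'' gives $P(A_1 \cap A_2 \cap B) = P(A_1)\,P(A_2 \cap B)$. Then I would compute the left-hand side of the desired identity,
\begin{equation*}
P(A_1 \cap A_2 \mid B) \;=\; \frac{P(A_1 \cap A_2 \cap B)}{P(B)} \;=\; \frac{P(A_1)\,P(A_2 \cap B)}{P(B)},
\end{equation*}
where the second equality uses the second hypothesis. Next I would compute the right-hand side,
\begin{equation*}
P(A_1 \mid B)\, P(A_2 \mid B) \;=\; \frac{P(A_1 \cap B)}{P(B)} \cdot \frac{P(A_2 \cap B)}{P(B)} \;=\; \frac{P(A_1)\,P(B) \cdot P(A_2 \cap B)}{P(B)^2},
\end{equation*}
using the first hypothesis. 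Cancelling a factor of $P(B)$ in the last denominator produces exactly the previous expression, establishing the equality required by Definition 3.

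There is really no main obstacle here: the only thing to be slightly careful about is that the second hypothesis is phrased as independence of $A_1$ with the event $A_2 \cap B$, so I should invoke the product form (Definition 2) rather than the conditional form (Definition 1), since $P(A_2 \cap B)$ is allowed to be zero. In that degenerate case both sides of the target identity vanish anyway, so the identity still holds. I would mention this briefly to reassure the reader that no hidden positivity assumption is being used beyond $P(B) > 0$.
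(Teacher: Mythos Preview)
Your argument is correct and is essentially the same as the paper's: both proofs unfold $P(A_1\cap A_2\mid B)$ via the definition of conditional probability, apply the independence of $A_1$ and $A_2\cap B$ to factor out $P(A_1)$, and then use the independence of $A_1$ and $B$ to identify $P(A_1)$ with $P(A_1\mid B)$. The only cosmetic difference is that the paper writes this as a single chain of equalities rather than computing the two sides separately, and it does not pause to comment on the degenerate case $P(A_2\cap B)=0$.
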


\begin{proof}
By checking the definition of conditional independence between two events, students can establish the identity that
\begin{eqnarray}\label{4}
P(A_1\cap A_2|B) = \frac{P(A_1\cap A_2 \cap B)}{P(B)} = P(A_1) \cdot \frac{P(A_2\cap B)}{P(B)} = P(A_1|B) \cdot P(A_2|B).
\end{eqnarray}
Hence, the statement holds.
\end{proof}



\begin{theorem}\label{Thm. 2}
Given event $B$ with $P(B)>0$, the following four statements in terms of events $A_1,A_2$ and their complements are equivalent: (i) $A_1$ and $A_2$ are conditionally independent; (ii) $A_1^\prime$ and $A_2$ are conditionally independent; (iii) $A_1$ and $A_2^\prime$ are conditionally independent; (iv) $A_1^\prime$ and $A_2^\prime$ are conditionally independent.
\end{theorem}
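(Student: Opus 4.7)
The plan is to exploit the fact that, for fixed $B$ with $P(B)>0$, the set function $Q(\cdot)=P(\cdot\,|\,B)$ is itself a probability measure on the same sample space. Consequently, Theorem~\ref{Thm. 2} is really the classical statement ``$A_1\perp A_2$ iff $A_1'\perp A_2$ iff $A_1\perp A_2'$ iff $A_1'\perp A_2'$'' transported from $P$ to $Q$. I would note this observation explicitly but then present a short, self-contained verification, since the intended audience (undergraduates) may not yet have internalized the ``conditional probability is a probability'' principle.

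For the self-contained route I would establish two elementary identities that follow from $P(\cdot|B)$ being a probability measure: $P(A_1'|B)=1-P(A_1|B)$, and, for any event $E$, $P(E\cap A_2'|B)=P(E|B)-P(E\cap A_2|B)$. Each is an immediate consequence of writing $E=(E\cap A_2)\cup(E\cap A_2')$ as a disjoint union inside $B$, and I would only sketch these without full calculation.

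Next I would prove the single implication $(\mathrm{i})\Rightarrow(\mathrm{iii})$ by computing
\begin{eqnarray*}
P(A_1\cap A_2'|B) &=& P(A_1|B)-P(A_1\cap A_2|B) \\
&=& P(A_1|B)-P(A_1|B)P(A_2|B) \\
&=& P(A_1|B)\bigl(1-P(A_2|B)\bigr)=P(A_1|B)P(A_2'|B).
\end{eqnarray*}
By the symmetric roles of $A_1$ and $A_2$ in Definition~3, swapping their names yields $(\mathrm{i})\Rightarrow(\mathrm{ii})$. Then, applying $(\mathrm{i})\Rightarrow(\mathrm{iii})$ with $A_1$ replaced by $A_1'$ (which is legitimate because (ii) gives the hypothesis of conditional independence of $A_1'$ and $A_2$), I obtain $(\mathrm{ii})\Rightarrow(\mathrm{iv})$, closing the chain $(\mathrm{i})\Rightarrow(\mathrm{ii})\Rightarrow(\mathrm{iv})$ and $(\mathrm{i})\Rightarrow(\mathrm{iii})$.

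To finish, the reverse implications come for free by the same machine: each of (ii), (iii), (iv), viewed as a conditional-independence statement, implies the others by replacing one or both of its events with their complements (using $(A_i')'=A_i$). I do not expect any real obstacle here; the only mild care needed is to keep the bookkeeping of which implication has been established straight, so I would organize the argument as one key calculation plus two symmetry remarks rather than four separate computations.
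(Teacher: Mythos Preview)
Your proposal is correct and takes essentially the same approach as the paper: a single computation showing that complementing one of the two events preserves conditional independence given $B$, followed by symmetry and repeated application to obtain the remaining implications. The paper proves $(\mathrm{i})\Rightarrow(\mathrm{ii})$ first and organizes the argument as the cycle $(\mathrm{i})\Rightarrow(\mathrm{ii})\Rightarrow(\mathrm{iv})\Rightarrow(\mathrm{iii})\Rightarrow(\mathrm{i})$, whereas you prove $(\mathrm{i})\Rightarrow(\mathrm{iii})$ first and add the helpful framing that $P(\cdot\,|\,B)$ is itself a probability measure; these are cosmetic rather than substantive differences.
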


\begin{proof}
We show that (i)$\Rightarrow$(ii)$\Rightarrow$(iv)$\Rightarrow$(iii)$\Rightarrow$(i). First, we show (i)$\Rightarrow$(ii). 
Note that when $A_1$ and $A_2$ are conditionally independent given $B$, one has
\begin{equation}\label{5}
\begin{split}
P(A_1^\prime \cap A_2 | B) & = \frac{P(A_1^\prime \cap A_2 \cap B)}{P(B)} = \frac{P(A_2 \cap B) - P(A_1 \cap A_2 \cap B)}{P(B)} \\
& = P(A_2|B) - P(A_1|B)P(A_2|B) = P(A_1^\prime | B)P(A_2|B),
\end{split}
\end{equation}
which indicates that $A_1^\prime$ and $A_2$ are conditionally independent given $B$. Note that students will need to recall the definition of conditional probability and the identity that $P(A_1|B)+P(A^\prime|B)=1$ to claim \eqref{5}. Following this result, (iv) holds immediately by retaining the first event $A_1^\prime$ and substituting the second event $A_2$ with $A_2^\prime$, as how we moved forward from (i) to (ii). In the same manner, (iv)$\Rightarrow$(iii) and (iii)$\Rightarrow$(i) can also be justified together with the interchangeability of $A_1$ and $A_2$.   
\end{proof}



\subsection{From two events to multiple events}

In analogy to pairwise and mutual independence of multiple events, we are now in a position to generalize the notion of conditional independence of multiple events.

\noindent \textit{Definition 4.} (Pairwise and Mutual Conditional Independence) A collection of events $A_1,A_2,...,A_n(n \ge 3)$ is said to be \textbf{pairwise conditionally independent} given event $B$ with $P(B)>0$, if and only if for all $i \ne j$,
\begin{eqnarray}\label{PCI}
P(A_i \cap A_j|B) = P(A_i|B)\cdot P(A_j|B). 
\end{eqnarray}
A collection of events $A_1,A_2,...,A_n(n \ge 3)$ is said to be \textbf{mutually conditionally independent} given another event $B$ with $P(B)>0$, if and only if for every subset of indices $i_1,i_2,...,i_k$,
\begin{eqnarray}\label{MCI}
P(A_{i_1} \cap A_{i_2} \cap \cdots \cap A_{i_k}|B) = P(A_{i_1}|B)\cdot P(A_{i_2}|B) \cdots P(A_{i_k}|B). 
\end{eqnarray}

For convenience, we drop the modifier ``mutually'' when talking about multiple mutually conditionally independent events in practice. Hence, whenever we say that $A_1,...,A_n$ are ``conditionally independent'', we mean ``mutually conditionally independent.'' Students may take it as an exercise to give examples showing that Remarks 1-4 are also satisfied for multiple conditionally independent events. In this case, Theorem \ref{Thm. 2} can also be modified accordingly.

\begin{theorem}\label{Thm. 3}
Given a collection of events $A_1,A_2,...,A_n (n \ge 2)$, let $A_i^\ast$ be either $A_i$ or its complement $A_i^\prime$, $i=1,2,...,n$. Then, all the following statements are equivalent: $A_1^\ast,A_2^\ast,...,A_n^\ast$ are conditionally independent given event $B$ with $P(B)>0$.
\end{theorem}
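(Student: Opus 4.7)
The plan is to reduce the equivalence of all $2^n$ statements to a single auxiliary claim: if $A_1,\ldots,A_n$ are conditionally independent given $B$, then so are $A_1', A_2, \ldots, A_n$. Because mutual conditional independence is symmetric in the indices, this ``flip the first event'' lemma immediately upgrades to the statement that flipping any single one of the events preserves mutual conditional independence. Any two of the $2^n$ configurations $A_1^\ast,\ldots,A_n^\ast$ differ in at most $n$ coordinates, so once the one-flip step is established, the full equivalence follows by a short induction on the number of positions where two configurations disagree.

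To prove the one-flip step I would verify the defining product rule of Definition 4 for the new family $A_1', A_2, \ldots, A_n$ on an arbitrary subset of indices $\{i_1,\ldots,i_k\}$. If $1 \notin \{i_1,\ldots,i_k\}$, the required identity is just the product rule for the original family restricted to those indices, so there is nothing to check. If $1 \in \{i_1,\ldots,i_k\}$, I relabel the remaining indices as $j_1,\ldots,j_{k-1}$ and mimic equation \eqref{5} from the proof of Theorem \ref{Thm. 2}:
\begin{equation*}
P(A_1' \cap A_{j_1} \cap \cdots \cap A_{j_{k-1}} | B) = P(A_{j_1} \cap \cdots \cap A_{j_{k-1}} | B) - P(A_1 \cap A_{j_1} \cap \cdots \cap A_{j_{k-1}} | B).
\end{equation*}
Mutual conditional independence of $A_1,\ldots,A_n$ given $B$ lets me factor each of the two terms on the right into a product of individual conditional probabilities; after pulling out the common factor $P(A_{j_1}|B)\cdots P(A_{j_{k-1}}|B)$, the remaining bracket simplifies to $1 - P(A_1|B) = P(A_1'|B)$, which yields the product rule for the flipped family.

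The main obstacle is bookkeeping rather than depth: one must remember that checking mutual (as opposed to pairwise) conditional independence for the flipped family requires invoking the product rule for the original family on \emph{every} relevant subset of indices, not merely the one subset that appears in a given identity. Once this careful accounting is in place, iteration of single flips handles any reassignment of the $A_i^\ast$, and Theorem \ref{Thm. 2} is recovered as the special case $n=2$.
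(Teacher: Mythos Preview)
Your proposal is correct and follows exactly the route the paper sketches: establish the one-flip step by mimicking the subtraction argument from equation \eqref{5}, then iterate flips one coordinate at a time to connect any two configurations. You have simply filled in the details the paper omits ``for brevity,'' including the case split on whether the distinguished index belongs to the chosen subset.
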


\begin{proof}
One may start with the assumption that $A_1,...,A_n$ are conditionally independent under $B$, and show that the collection of events stay conditionally independent if we substitute one of them with its complement, for instance, $A_1^\prime,A_2,...,A_n$. This can be done in a similar way as we proved (i)$\Rightarrow$(ii) in Theorem \ref{Thm. 2}. Then, we use this result repeatedly with one $A_i^\ast$ replaced by its complement at a time, and a complete proof will go through. We leave out many details for brevity.
\end{proof}

\section{Extending Bayes' Theorem}\label{Sect. 3}

When it comes to conditional probability, Bayes' Theorem is helpful for reversing the role of the event and the condition. Suppose $A$ is an event with $P(A)>0$, and $B_1,B_2,...,B_m (m \ge 2)$ are mutually exclusive and exhaustive events, that is, a partition of the sample space $S$. Then,
\begin{eqnarray}\label{BT}
P(B_k|A) = \frac{P(A|B_k)P(B_k)}{\sum_{i=1}^{m}P(A|B_i)P(B_i)}, \ \ k=1,2,...,m.
\end{eqnarray}

Considering the set of events $\{B,B^\prime\}$ as a trivial partition of $S$, we have a simplified version of Bayes' Theorem as follows:
\begin{eqnarray}\label{SBT}
P(B|A) = \frac{P(A|B)P(B)}{P(A|B)P(B)+P(A|B^\prime)P(B^\prime)},
\end{eqnarray} 
which is widely used in diagnostic testing for diseases. See the example below.

\noindent \textit{Example 6.} Let $D$ be the event that a (rare) disease is present, so $D^\prime$ denotes the event that the disease is not present. Suppose there exists a diagnostic test for this disease, and let $T^+$ and $T^-$ be the events that the test result is positive and negative, respectively. Here,
\begin{enumerate}
  \item $P(D)$, called the \textit{prevalence}, is interpreted as the probability that a randomly-selected person has the disease and is assumed known.
  \item $P(T^+|D)$, called the test \textit{sensitivity}, is interpreted as the probability that the test gives a ``true positive'' result. As a characteristic of the test, it is known to us.
  \item $P(T^-|D^\prime)$, called the test \textit{specificity}, is interpreted as the probability that the test gives a ``true negative'' result. As another characteristic of the test, it is also known to us. 
  \item $P(D|T^+)$, called the \textit{positive predictive value} (PPV), is the conditional probability that one has the disease given that the test result is positive. If the test is positive and the PPV is high enough, then it would be appropriate to initiate a treatment. On the other hand, if the PPV is low, then further testing might be appropriate.
  \item $P(D^\prime|T^-)$, called the \textit{negative predictive value} (NPV), is the conditional probability that one does not have the disease given that the test result is negative. If the test is negative and the NPV is high enough, then one can conclude no disease is present. On the other hand, if is low, then further testing might be appropriate.
\end{enumerate}
One may refer to \citet{Altman and Bland (1994a),Altman and Bland (1994b)} for more details of these notions. 

Mostly, we are interested in the PPV. Based on Bayes' Theorem in \eqref{SBT}, we substitute $A$ with $T^+$, $B$ with $D$ and obtain
\begin{equation}\label{6}
\begin{split}
P(D|T^+) & = \frac{P(T^+|D)P(D)}{P(T^+|D)P(D)+P(T^+|D^\prime)P(D^\prime)}\\
& = \frac{P(T^+|D)P(D)}{P(T^+|D)P(D)+[1-P(T^-|D^\prime)][1-P(D)]}.
\end{split}
\end{equation}
Again, students need to recall the fact that $T^+$ and $T^-$ are complementary events and thus $P(T^+|D^\prime)+P(T^-|D^\prime)=1$. 

Most textbook examples stop discussions upon the derivation of PPV, even when it is sufficiently small indicating the necessity of further testing. However, students may be curious about the following questions: What if a second test is conducted and the test result is still positive, or negative? At that point, what is the probability that one has the disease, indeed?    

In this section, we are ready to extend Bayes' Theorem under the assumption of conditional independence and answer the above questions.

\subsection{An extended Bayes' Theorem}\label{3.1}

Provided a set of events $\{B_1,B_2,...,B_m,m\ge2\}$ with all positive probabilities, which forms a partition of the sample space $S$, suppose events $A_1,A_2,...,A_n,n\ge2$ are conditionally independent under each $B_k,k=1,2,...,m$. Suppose also that we are interested in the conditional probability $P(B_k|\bigcap_{i=1}^{n}A_i)$. For any $k=1,2,...,m$ and $i=1,2,...,n$, if the quantities
$P(B_k)$'s and $P(A_i|B_k)$'s are all known to us, we give the so-called \textit{extended Bayes' Theorem} as follows:
\begin{theorem}[Extended Bayes' Theorem]\label{EBT}
\begin{eqnarray}\label{7}
P\left(B_k|\bigcap_{i=1}^{n}A_i\right) = \frac{P(B_k)\cdot\prod_{i=1}^{n}P(A_i|B_k)}{\sum_{k=1}^{m} P(B_k)\cdot\prod_{i=1}^{n}P(A_i|B_k)}.
\end{eqnarray}
\end{theorem}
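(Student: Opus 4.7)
The plan is to mimic the derivation of the classical Bayes' Theorem in \eqref{BT}, but carry an extra factor at every step that comes from the product structure supplied by conditional independence. The starting point is simply the definition of conditional probability applied to the event $\bigcap_{i=1}^{n} A_i$:
\[
P\!\left(B_k \,\Big|\, \bigcap_{i=1}^{n} A_i\right) \;=\; \frac{P\!\left(B_k \cap \bigcap_{i=1}^{n} A_i\right)}{P\!\left(\bigcap_{i=1}^{n} A_i\right)}.
\]
So the proof reduces to evaluating the numerator and the denominator separately under the hypotheses of the theorem.

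For the numerator, I would rewrite $P(B_k \cap \bigcap_{i=1}^{n} A_i)$ as $P(\bigcap_{i=1}^{n} A_i \mid B_k)\, P(B_k)$ via the multiplication rule (which is legitimate since $P(B_k)>0$), and then invoke the assumption that $A_1,\ldots,A_n$ are mutually conditionally independent given $B_k$. That is exactly the statement in \eqref{MCI} applied to the full index set $\{1,2,\ldots,n\}$, and it yields
\[
P\!\left(B_k \cap \bigcap_{i=1}^{n} A_i\right) \;=\; P(B_k)\cdot \prod_{i=1}^{n} P(A_i \mid B_k).
\]

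For the denominator, I would first use the law of total probability with respect to the partition $\{B_1,\ldots,B_m\}$ to write $P(\bigcap_{i=1}^{n} A_i) = \sum_{j=1}^{m} P(\bigcap_{i=1}^{n} A_i \mid B_j)\, P(B_j)$, and then apply the mutual conditional independence assumption once for each $B_j$ (this is where it matters that the hypothesis is imposed under \emph{each} $B_k$, not just one of them). Substituting into the denominator and combining with the numerator gives formula \eqref{7}; renaming the summation index from $k$ to something distinct from the outer $k$ is a cosmetic fix I would make for clarity, though the statement as written already overloads the symbol.

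The argument is essentially routine once conditional independence is in hand, so I do not anticipate a genuine obstacle. The only place where a student could slip up is the denominator: one must remember that conditional independence must be applied separately inside each term of the total-probability sum, and must \emph{not} be confused with unconditional independence of $A_1,\ldots,A_n$ (which is generally false, as Remark 4 and Example 5 already warned). Also, invoking the multiplication rule requires $P(B_k)>0$ for every $k$, which is part of the hypothesis, so no case analysis is needed.
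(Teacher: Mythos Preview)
Your proposal is correct and follows essentially the same route as the paper's own proof: write the conditional probability as a ratio, expand the numerator via the multiplication rule and conditional independence, expand the denominator via the law of total probability and conditional independence for each $B_j$, and combine. Your remarks about the overloaded index and the pitfall of confusing conditional with unconditional independence mirror the paper's own Remark~5.
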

\begin{proof}
By the definition of conditional probability, students can easily obtain
\begin{eqnarray}\label{8}
P\left(B_k|\bigcap_{i=1}^{n}A_i\right)=\frac{P\left(\bigcap_{i=1}^{n} A_i \cap B_k\right)}{P\left(\bigcap_{i=1}^{n} A_i\right)},
\end{eqnarray}
where the numerator
\begin{eqnarray}\label{9}
P\left(\bigcap_{i=1}^{n} A_i \cap B_k\right) = P\left(\bigcap_{i=1}^{n} A_i | B_k\right)P(B_k) = P(B_k)\cdot\prod_{i=1}^{n}P(A_i|B_k),
\end{eqnarray}
and the denominator
\begin{equation}\label{10}
\begin{split}
P\left(\bigcap_{i=1}^{n} A_i\right) & = \sum_{k=1}^{m}P\left(\bigcap_{i=1}^{n} A_i | B_k \right) \cdot P(B_k)\\
& = \sum_{k=1}^{m} P(B_k)\cdot\prod_{i=1}^{n}P(A_i|B_k).
\end{split}
\end{equation}
The proof is now complete by combining \eqref{9} and \eqref{10} together.
\end{proof}

\noindent \textit{Remark 5.} In terms of $P\left(\bigcap_{i=1}^{n} A_i\right)$ in \eqref{10}, a possible error that some students may make is to treat $A_i$'s as independent events and thus write
$$P\left(\bigcap_{i=1}^{n} A_i\right)=\prod_{i=1}^{n}P(A_i),$$
where $P(A_i),i=1,2,...,n$ is further computed by using the Law of Total Probability:
$$P(A_i) = \sum_{k=1}^{m}P(A_i|B_k)P(B_k).$$ 
As is pointed out in Remark 4, however, this is not necessarily true. And Example 5 provides a simple counter-example when $m=n=2$. It emphasizes that one should not confuse independence with conditional independence.

The significance of Theorem \ref{EBT} is immediately recognized in answering questions raised in Example 6. Suppose a person whose first test for the disease is positive, denoted by $T_1^+$, goes for a second test separately and the test is still positive, denoted by $T_2^+$. Due to the test sensitivity and specificity, it is reasonable to assume that $T_1^+$ and $T_2^+$ are conditionally independent under $D$ as well as under $D^\prime$.  Then, according to the extended Bayes' Theorem in Theorem \ref{EBT}, the probability that he actually has the disease can be updated as follows:
\begin{eqnarray}\label{11}
P(D|T_1^+ \cap T_2^+) = \frac{P(T^+|D)^2P(D)}{P(T^+|D)^2P(D)+[1-P(T^-|D^\prime)]^2(1-P(D))},
\end{eqnarray}
where $P(D),P(T^+|D)$ and $P(T^-|D^\prime)$ continue to denote prevalence, sensitivity and specificity mentioned earlier, respectively. If $P(D|T_1^+\cap T_2^+)$ is still low, then a third test might be appropriate. In general, we can obtain the probability that one has the disease given $n$ conditionally independent positive test results:
\begin{eqnarray}\label{PPV}
P\left(D|\bigcap_{i=1}^{n} T_i^+\right) = \frac{P(T^+|D)^n P(D)}{P(T^+|D)^nP(D)+[1-P(T^-|D^\prime)]^n(1-P(D))}.
\end{eqnarray}
This can be left as an exercise for students to practice.  

\noindent \textit{Remark 6.} For an accurate diagnostic test, both sensitivity and specificity are close to one. Then, it is safe to assume that the quantity $$\frac{P(T^+|D)}{1-P(T^-|D^\prime)},$$
defined as the \textit{likelihood ratio} \citep[See][]{Altman and Bland (1994b)}, is larger than 1. As a result, it is not hard for students to observe that
$$\lim_{n\to\infty}P\left(D|\bigcap_{i=1}^{n} T_i^+\right)=1$$
by using some elementary calculus techniques, which implies that a sequence of positive tests can be a good indicator of the presence of disease. 

\subsection{A hypothetical example}

To illustrate the application of the extended Bayes' Theorem and Remark 6, we include a hypothetical example borrowed from \citet[p. 220]{Utts and Heckard (2011)} that is appealing to students taking elementary statistics courses with modifications.

\noindent \textit{Example 7.} Last week, Alicia went to her physician for a routine medical exam and was told that one of her tests came back positive, indicating that she may have a disease $D$. It is known that the test is 95\% accurate as to whether someone has this disease or not. In other words, the test sensitivity and specificity are both 95\%. Suppose that only 1 out of 1000 women of Alicia's age indeed has $D$. With knowledge on Bayes' Theorem, Alicia then computed her actual chance of having the disease $D$ given the positive test result by referring to \eqref{6}:
\begin{eqnarray}\label{12}
P(D|T^+) = \frac{(0.95)(0.001)}{(0.95)(0.001)+(1-0.95)(1-0.001)} = 0.019.
\end{eqnarray} 
The positive predicted value is so small that further testing for the disease $D$ may be needed. Therefore, Alicia went for the same test for $D$ for a second time. Unfortunately, the test result turned out positive again. At this point, by using the extended Bayes' Theorem in \eqref{11}, we have
\begin{eqnarray}\label{12}
P(D|T_1^+ \cap T_2^+) = \frac{(0.95)^2(0.001)}{(0.95)^2(0.001)+(1-0.95)^2(1-0.001)} = 0.265.
\end{eqnarray}
With a second positive test result, Alicia's chance of having the disease increased hugely by almost 14 times. Suppose Alicia took a third and fourth test and they were again positive. Referring to \eqref{PPV}, we have
\begin{eqnarray}
P(D|T_1^+ \cap T_2^+ \cap T_3^+) = \frac{(0.95)^3(0.001)}{(0.95)^3(0.001)+(1-0.95)^3(1-0.001)} = 0.873,
\end{eqnarray}
and   
\begin{eqnarray}
P(D|T_1^+ \cap T_2^+ \cap T_3^+\cap T_4^+) = \frac{(0.95)^4(0.001)}{(0.95)^4(0.001)+(1-0.95)^4(1-0.001)} = 0.992,
\end{eqnarray}
closer and closer to 1.

\subsection{A real data illustration}
\cite{Bhatti and Wightman (2008)} provided a real-world application of Bayes' Theorem. Table 2 in their paper gives the probabilities of being HIV positive for one and two positive tests with sensitivity $0.99$ and specificity $0.99$ with various prevalence in ten geographic regions. In the spirit of their paper, we calculate the probabilities of adult aged 15 to 49 being HIV positive for one, two, and three positive tests using our extended Bayes' Theorem based on the data coming from the \cite{Joint United Nations Programme on HIV/AIDS (2018)}. The results are presented in Table 1.

\begin{table}[h!]\label{Tab. 1}
\small
\begin{center}
\caption{Probability of adult aged 15 to 49 being HIV positive by geographic region given one positive test, two and three conditionally independently positive tests with sensitivity $0.99$ and specificity $0.99.$}
\begin{tabular}{lcccc}\\\hline\hline
 & Adult & One & Two  & Three \\
Region       & Prevalence & Positive & Positives & Positives\\\hline
Asia and the Pacific & 0.002 & 0.1656 & 0.9516 & 0.9995\\
Caribbean &0.012&0.5460&0.9917&0.9999\\
Eastern and Southern Africa & 0.070 & 0.8817 & 0.9986 & 1.0000\\
Eastern Europe and Central Asia &0.009&0.4734&0.9889&0.9999\\
Latin America &0.004& 0.2845 & 0.9752& 0.9997\\
Middle East and North Africa &0.001& 0.0902 & 0.9075 & 0.9990\\
Western and Central Africa & 0.015 & 0.6012 & 0.9933 & 0.9999\\
Western and Central Europe and North America & 0.002 & 0.1656 & 0.9516 & 0.9995\\
\hline
\end{tabular}
\end{center}
\end{table}
For small prevalence (e.g., 0.001), the PPV given one positive test may remain to be small (e.g., 0.0902) even if both sensitivity and specificity are large (e.g., 0.99). Given a second positive test, however, this conditional probability will increase dramatically and approach 1. All probabilities of adult aged 15 to 49 being HIV positive for three positive tests are almost equal to $1.$ The real data illustration has justified Remark 6. 

Furthermore, it will be a good idea for instructors to interpret the interesting phenomenon of small PPV in detail: This is due to the low prevalence of disease instead of the ``inaccurate'' diagnostic test. It demonstrates the necessity of follow-up confirmatory tests. And in fact, the probability $P\left(D|\bigcap_{i=1}^{n} T_i^+\right)$ approaches 1 very fast when both sensitivity and specificity are large enough, showing the great significance of diagnostic test accuracy.          

\subsection{Applications}\label{Sect 3.4}

In this section, we propose a sequential testing scheme in which the extended Bayes' Theorem is applied for more efficient disease diagnosis. For $n \ge 1$, define $p_n = P(D|\bigcap_{i=1}^{n}T_i^\ast)$, where $T_i^\ast = T_i^+ \text{ or } T_i^-$ meaning that the $i$th test is positive or negative, $i=1,...,n$, so $p_n$ can be interpreted as the conditional probability that one has the disease given a sequence of test results $\{T^\ast_n\}$. Let $\{\alpha_n\}$ and $\{\beta_n\}$ be two nondecreasing series of numbers predetermined appropriately such that 
$$0<\alpha_1\le \cdots \le \alpha_n \le \cdots \le \beta_1 \le \cdots \le \beta_n \le \cdots <1.$$ Then, we develop a stopping rule for diagnostic testing as follows:
\begin{eqnarray}\label{dtsr}
N = \inf\{n\ge1: p_n \le \alpha_n \text{ or } p_n \ge \beta_n\}.
\end{eqnarray}
That is, we conduct the test successively and terminate at the first time $N=n$ such that either $p_n\le \alpha_n$ or $p_n \ge \beta_n$ happens. And we conclude that the disease is present (or not present) if $p_N \ge \beta_N$ (or $p_N \le \alpha_N$). Students from some interdisciplinary programs may find it interesting to follow this direction and explore the possibility for future research work. 

\section{Overall Concluding Thoughts}\label{Sect. 4}

In Section \ref{Sect. 2}, we have discussed conditional independence of events alone. It is worth mentioning that we can also generalize the concept of conditional independence of random variables, which is of great importance in the area of Bayesian statistics. A lot of details are left out in this article for brevity, as it is prepared for study of elementary statistics and mathematical statistics at the undergraduate level overall. One may see a batch of articles including \cite{Dawid (1979)}, \cite{Dawid (1998)} and \cite{Basu and Pereira (2011)} for reference.

Under the assumption of conditional independence, we have put forward the extended Bayes' Theorem and address its application in diagnostic testing with examples and real data illustrations. A novel idea is proposed in Section \ref{Sect 3.4} briefly, but one may follow this direction to make it more substantial. Indeed, instructors are encouraged to introduce these materials accordingly to those students standing out in class. 
\bigskip

\bibhang=1.7pc
\bibsep=2pt
\renewcommand\bibname{\large \bf References}
\expandafter\ifx\csname
natexlab\endcsname\relax\def\natexlab#1{#1}\fi
\expandafter\ifx\csname url\endcsname\relax
  \def\url#1{\texttt{#1}}\fi
\expandafter\ifx\csname urlprefix\endcsname\relax\def\urlprefix{URL}\fi

\end{document}